\providecommand{\leftsquigarrow}{%
  \mathrel{\mathpalette\reflect@squig\relax}%
}
\newcommand{\reflect@squig}[2]{%
  \reflectbox{$\m@th#1\rightsquigarrow$}%
}
\newcommand{\Prof}{\mathcal{P}}
\newcommand{\T}{\mathcal{T}}
\newcommand{\DE}{D^*}
\newcommand{\rt}{\mathrm{rt}}
\newcommand{\init}{\mathrm{init}}
\newcommand{\ter}{\mathrm{ter}}
\newcommand{\cnt}{\mathrm{count}}
\newcommand{\Desc}{\textsc{Descendant}}
\begin{document} \sloppy

\mainmatter  

\title{Constructing and Employing Tree Alignment Graphs for Phylogenetic Synthesis}

\titlerunning{Constructing and Employing Tree Alignment Graphs}

%
%
\author{Ruchi Chaudhary\inst{1}\and David Fern\'{a}ndez-Baca\inst{2}\and J. Gordon Burleigh\inst{1}}
%


\institute{Department of Biology, University of Florida, Gainesville, FL 32611, USA
 \and Department of Computer Science, Iowa State University, Ames, IA 50011, USA}

%
%

\maketitle

\newtheorem{observation}{\textbf{Observation}}

\begin{abstract}
Tree alignment graphs (TAGs) provide an intuitive data structure for storing phylogenetic trees that exhibits the relationships of the individual input trees and can potentially account for nested taxonomic relationships.   This paper provides a theoretical foundation for the use of TAGs in phylogenetics.  We provide a formal definition of TAG that --- unlike previous definition ---  does not depend on the order in which input trees are provided.  In the consensus case, when all input trees have the same leaf labels, we describe algorithms for constructing majority-rule and strict consensus trees using the TAG.  When the input trees do not have identical sets of leaf labels, we describe how to determine if the input trees are compatible and, if they are compatible, to construct a supertree that contains the input trees.
\end{abstract}

\section{Introduction}
Phylogenetic trees are graphs depicting the evolutionary relationships among species; thus, they are powerful tools for examining fundamental biological questions and understanding biodiversity (e.g., \cite{Baum:2012:K}). The wealth of available genetic sequences has rapidly increased the number of phylogenetic studies from across the tree of life (e.g., \cite{Goldman:2008:BS}). For example, STBase contains a million species trees generated from sequence data in GenBank \cite{McMahon:2015:PLoS}.  New next-generation sequencing technologies and sequence capture methods (e.g., \cite{Faircloth:2012:UEA,Lemmon:2012:AHE,McCormack:2012:UNP}) will further increase the rate in which phylogenetic data is generated in the coming years. This continuous flow of new phylogenetic data necessitates new approaches to store, evaluate, and synthesize existing phylogenetic trees.

Recently Smith et al.~introduced tree alignment graphs (TAGs) as a way to analyze large collections of phylogenetic trees \cite{Smith:PloSCB:2013}. TAGs preserve the structure of the input trees and thus  provide an intuitive, interpretable representation of the input trees, which enables users to visually assess patterns of agreement and conflict.  The TAG structure also makes it possible to combine trees whose tips include nested taxa (e.g., the tips of one tree contain species in taxonomic families, while the tips of another tree contain the families), which was true of only a few previous synthesis approaches \cite{Berry:2013:MLS,Berry:2006:RAB,Daniel:2005:AB}.
Indeed, a TAG was used to merge a taxonomy of all $\sim$2.3 million named species with $\sim$500 published phylogenetic trees to obtain an estimate of the tree of life \cite{Smith:2015:OT}.

The original TAG definition of Smith et al. \cite{Smith:PloSCB:2013} depends on the order of the input trees, which can be problematic.  Further, the several details of the synthesis process were not specified.    Our aim in this paper is to lay the theoretical foundations for further research on TAGs.  To this end, we first provide a mathematically precise definition of TAGs which is independent of the order of the input trees (Section \ref{definition}), and develop an algorithm for constructing TAGs (Section \ref{constTAG}).  We also describe algorithms to build strict and majority-rule consensus trees using TAGs (Section \ref{consensus}). We show how to check the compatibility among input trees and construct a supertree from compatible phylogenetic trees using a TAG (Section \ref{compatibility}).  Finally, we discuss the future applications and problems associated with using TAGs for assessing and synthesizing the enormous and rapidly growing number of available phylogenetic trees in the future (Section \ref{discussion}).

\paragraph{Related work.}
TAGs are part of a long history of using graph structures to synthesize the relationships among phylogenetic trees with partial taxonomic overlap.  The classic example is the \textsc{Build} algorithm \cite{Aho:1981:ITL,Semple:2003:phy} and its later variations (e.g., \cite{Berry:2006:RAB,Constantinescu:1995:AEA,Daniel:2005:AB,Ng:1996:ROR}).  These methods yield polynomial-time algorithms to determine whether a collection of input trees is compatible, and, if so, output the parent tree(s). Other graph-based algorithms, such as the \textsc{MinCutSupertree} \cite{Semple:2000:SMR}, the Modified \textsc{MinCutSupertree} \cite{Page:2002:MMS}, or the \textsc{MultiLevelSupertree} \cite{Berry:2013:MLS} algorithm allow users to synthesize collections of conflicting phylogenetic trees.
Although TAGs share important features with these earlier graph-theoretic approaches, TAGs display more directly the phylogenetic relationships exhibited by the input trees and therefore provide a more intuitive framework to examine patterns of conflict among trees \cite{Smith:PloSCB:2013}. TAGs also potentially summarize the information in the input trees with fewer nodes than previous graphs for semi-labeled trees \cite{Berry:2006:RAB,Daniel:2005:AB}.

\section{Preliminaries}
\label{definition}
\subsection{Notation}

Let  $T$ be a rooted tree.  Then, $\rt(T)$ and $\mathcal{L}(T)$ denote, respectively, the root and the leaf set of $T$, and $V(T)$ and $E(T)$ denote, respectively, the set of vertices and the set of edges of $T$. The set of all internal vertices of $T$ is $I(T) := V(T) \backslash \mathcal{L}(T)$. We define $\leq_{T}$ to be the partial order on $V(T)$ where $x \leq_{T} y$ if $y$ is a vertex on the path from $\rt(T)$ to $x$. If $\{x, y\} \in E(T)$ and $x \leq_{T} y$, then $y$ is the \emph{parent} of $x$ and $x$ is a \emph{child} of $y$. Two vertices in $T$ are \emph{siblings} if they share a parent. 

Let $X$ be a finite set of \emph{labels}. A \emph{phylogenetic tree on $X$} is a pair $\mathcal{T} = (T,\varphi)$ where 1) $T$ is a rooted tree in which every  internal vertex has degree at least three, except $\rt(T)$, which has degree at least two, and 2) $\varphi$ is a bijection from $\mathcal{L}(T)$ to $X$ \cite{Semple:2003:phy}. Tree $T$ is called the \emph{underlying tree} of $\mathcal{T}$ and $\varphi$ is called the \emph{labeling map} of $\mathcal{T}$. For convenience, we will often assume that the set of labels $X$ of $\mathcal{T}$ is simply $\mathcal{L}(\mathcal{T})$.
The \emph{size} of $\mathcal{T}$ is the cardinality of $\mathcal{L}(T)$.  $\mathcal{T}$ is \emph{binary} (or \emph{fully resolved}) if every vertex $v \in I(T) \setminus \rt(T)$ has degree three and $\rt(T)$ has degree two.

Let $\mathcal{T} = (T,\varphi)$ be a phylogenetic tree on $X$ and let $v$ be any vertex in $V(T)$. The subtree of $T$ rooted at vertex $v \in V(T)$, denoted by $T_v$, is the tree induced by $\{u \in V(T): u \leq v\}$. The \emph{cluster at $v$}, denoted $C_{\mathcal{T}}(v)$, is the set of leaf labels $\{\varphi(u) \in X: u \in \mathcal{L}(T_v)\}$.  We write $\mathcal{H}(\mathcal{T})$ to denote the set of all clusters of $\mathcal{T}$. Note that $\mathcal{H}(\mathcal{T})$ includes trivial clusters; i.e., clusters of size one or $|X|$. 

\subsection{Tree Alignment Graphs}
Here we define the tree alignment graph (TAG).  Our definition is somewhat different from that of Smith et al. \cite{Smith:PloSCB:2013}.  We explain these differences later.


We first need an auxiliary notion.  A \emph{directed multi-graph} is a directed graph that is allowed to have multiple edges between the same two vertices.  More formally, a directed multi-graph is a pair $(V,E)$ of disjoint sets (of vertices and edges) together with two maps $\init:E \rightarrow V$ and $\ter:E \rightarrow V$ assigning to each edge $e$ an \emph{initial vertex} $\init(e)$ and a \emph{terminal vertex} $\ter(e)$ \cite{Diestel:2000:graph}. Edge $e$ is said to be \emph{directed} from $\init(e)$ to $\ter(e)$.  The \emph{in-degree} of a node $v$ is the number of edges $e$ such that $\ter(e) = v$; the \emph{out-degree} of $v$ is the number of edges $e$ such that $\init(e) = v$.  We call a node with out-degree zero a \emph{leaf node}; all non-leaf nodes are called \emph{internal nodes}.

A \emph{directed acyclic (multi-) graph}, \emph{DAG} for short, is a directed multi-graph with no cycles.

\begin{definition} [Tree Alignment Graph (TAG)]\label{dfn:TAG} Let $\mathcal{P}$ be a collection of phylogenetic trees and $S = \bigcup_{\mathcal{T} \in \mathcal{P}} \mathcal{L}(\mathcal{T})$. The \emph{tree alignment graph} of $\mathcal{P}$ is a directed graph  $D = (U,E)$ along with an injective function $f: U \rightarrow 2^S$,  called the \emph{vertex-labeling function}, such that
\begin{itemize}
    \item for each $v \in U$, $f(v) \in \mathcal{H}(\mathcal{T})$ for some $\mathcal{T} \in \mathcal{P}$, and
    \item for each $\mathcal{T}:=(T,\varphi) \in \mathcal{P}$ and each $e := \{x,y\} \in E(T)$  where $x<_{T} y$, there exists a unique $e' \in E$ such that $C_{\mathcal{T}}(x) = f(\ter(e'))$ and $C_{\mathcal{T}}(y) = f(\init(e'))$.
\end{itemize}
\end{definition}

Figure \ref{fig:mytag} illustrates Definition \ref{dfn:TAG}.
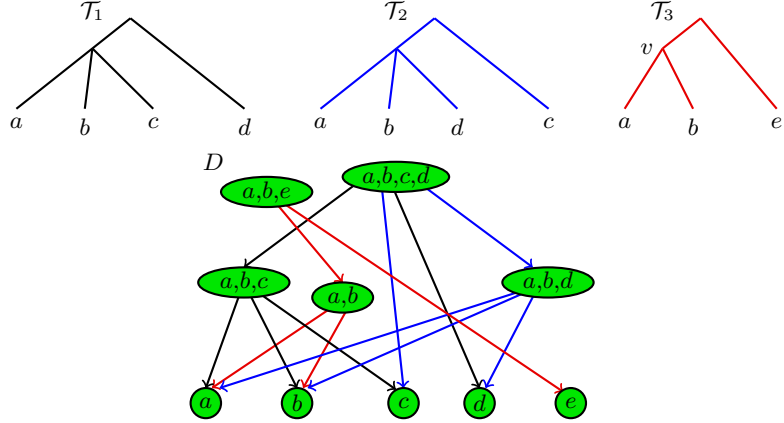
\begin{figure}
    \centering
     \begin{tikzpicture}
        \draw [thick] (0,1.2) to (-0.5,0.8);
        \draw [thick] (0,1.2) to (1.5,0) node [below] {\textbf{$d$}};
        \draw [thick] (-0.5,0.8) to (0.3,0) node [below] {\textbf{$c$}};
        \draw [thick] (-0.5,0.8) to (-0.6,0) node [below] {\textbf{$b$}};
        \draw [thick] (-0.5,0.8) to (-1.5,0) node [below] {\textbf{$a$}};
        \node at (-0.5,1.3) {$\mathcal{T}_1$};

        \draw [thick,blue] (4,1.2) to (3.5,0.8);
        \draw [thick,blue] (4,1.2) to (5.5,0) node [black,below] {\textbf{$c$}};
        \draw [thick,blue] (3.5,0.8) to (4.3,0) node [black,below] {\textbf{$d$}};
        \draw [thick,blue] (3.5,0.8) to (3.4,0) node [black,below] {\textbf{$b$}};
        \draw [thick,blue] (3.5,0.8) to (2.5,0) node [black,below] {\textbf{$a$}};
        \node at (3.5,1.3) {$\mathcal{T}_2$};

        \draw [thick,red!90!black] (7.5,1.2) to (7,0.8) node [black,left] {$v$};
        \draw [thick,red!90!black] (7.5,1.2) to (8.5,0) node [black,below] {\textbf{$e$}};
        \draw [thick,red!90!black] (7,0.8) to (7.4,0) node [black,below] {\textbf{$b$}};
        \draw [thick,red!90!black] (7,0.8) to (6.5,0) node [black,below] {\textbf{$a$}};
        \node at (7,1.3) {$\mathcal{T}_3$};
    \end{tikzpicture}

     \begin{tikzpicture}

        \draw [thick,->,red!90!black] (7.5,1.5) to (5.58,0.2);  
        \draw [thick,->,red!90!black] (7.5,1.5) to (6.78,0.2);  

        \draw [thick,->,black] (7.6,3) to (6,1.8);  
        \draw [thick,->,black] (6,1.6) to (5.5,0.2);  
        \draw [thick,->,black] (6,1.6) to (6.7,0.2);  
        \draw [thick,->,black] (6,1.6) to (8,0.18);  
        \draw [thick,->,black] (7.9,3) to (9.1,0.2);  

        \draw [thick,->,blue] (8.2,3) to (9.8,1.8);  
        \draw [thick,->,blue] (10,1.6) to (5.68,0.2);  
        \draw [thick,->,blue] (10,1.6) to (6.84,0.2);  
        \draw [thick,->,blue] (9.9,1.6) to (9.18,0.2);  
        \draw [thick,->,blue] (7.8,3) to (8.1,0.2);  

        \draw [thick,->,red!90!black] (6.3,2.8) to (7.3,1.6);  
        \draw [thick,->,red!90!black] (6.3,2.8) to (10.2,0.15);  

        \draw [thick,fill=green!90!black] (6,1.6) ellipse (6mm and 2mm);
        \node at (6,1.6) {$a$,$b$,$c$};

        \draw [thick,fill=green!90!black] (7.3,1.4) ellipse (4mm and 2mm);
        \node at (7.3,1.4) {$a$,$b$};

        \draw [thick,fill=green!90!black] (6.3,2.8) ellipse (6mm and 2mm);
        \node at (6.3,2.8) {$a$,$b$,$e$};

        \draw [thick,fill=green!90!black] (8,3) ellipse (7mm and 2mm);
        \node at (8,3) {$a$,$b$,$c$,$d$};
        \node at (5.6,3.2) {$D$};

        \draw [thick,fill=green!90!black] (10,1.6) ellipse (6mm and 2mm);
        \node at (10,1.6) {$a$,$b$,$d$};

        \draw [thick,fill=green!90!black] (5.5,0) circle [radius=0.2];
        \draw [thick,fill=green!90!black] (6.7,0) circle [radius=0.2];
        \draw [thick,fill=green!90!black] (8.1,0) circle [radius=0.2];
        \draw [thick,fill=green!90!black] (9.1,0) circle [radius=0.2];
        \draw [thick,fill=green!90!black] (10.3,0) circle [radius=0.2];

        \node at (5.5,0) {$a$};
        \node at (6.7,0) {$b$};
        \node at (8.1,0) {$c$};
        \node at (9.1,0) {$d$};
        \node at (10.3,0) {$e$};
      \end{tikzpicture}
    \vspace{-0.1cm}
    \caption{A collection $\mathcal{P}$ of phylogenetic trees $\mathcal{T}_1$, $\mathcal{T}_2$, and $\mathcal{T}_3$ and the TAG $D$ of $\mathcal{P}$. The edges of $D$ are colored so as to correspond to the input trees; the internal vertices are labeled for clarity.}\label{fig:mytag}
    \vspace{-0.2cm}
  \end{figure}

\paragraph{Remarks:}
 \begin{enumerate}
   \item Note that we only use the vertex-labeling function $f$ to facilitate the definition and to label the leaf nodes of the TAG. We do not actually label the internal vertices of the TAG, since, as the TAG gets bigger, assigning labels using $f$ becomes impractical.
   \item Having a unique edge in the TAG for each input tree edge enables systematically annotating the TAG for each individual input tree, as the TAG also provides a means for storing phylogenetic trees.  
 \end{enumerate}

\begin{lemma} \label{lemma1} The TAG is acyclic. \end{lemma}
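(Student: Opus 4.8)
The plan is to argue by contradiction: suppose the TAG $D$ contains a directed cycle, and derive a contradiction from the structure of the vertex-labeling function $f$ together with the fact that each edge of $D$ comes from an edge of some input tree. First I would observe the key local property of edges: if $e' \in E$ arises from an edge $\{x,y\} \in E(T)$ with $x <_T y$, then $f(\ter(e')) = C_{\mathcal{T}}(x)$ and $f(\init(e')) = C_{\mathcal{T}}(y)$, and since $x$ is a proper descendant of $y$ in $T$ we have $C_{\mathcal{T}}(x) \subsetneq C_{\mathcal{T}}(y)$. Hence \emph{every} edge $e'$ of $D$ satisfies $f(\ter(e')) \subsetneq f(\init(e'))$; that is, $f$ strictly decreases (with respect to set inclusion, hence also in cardinality) along every edge of $D$.

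Next I would push this along a hypothetical directed cycle $u_0 \to u_1 \to \cdots \to u_k = u_0$ in $D$. Applying the local property edge by edge gives the chain $f(u_0) \supsetneq f(u_1) \supsetneq \cdots \supsetneq f(u_k) = f(u_0)$, so in particular $|f(u_0)| > |f(u_0)|$, which is absurd. (One could equally phrase this as: the map $u \mapsto |f(u)|$ is a strictly decreasing potential function along directed edges, which precludes cycles.) This uses that $f$ is well-defined on all vertices of $D$, which is guaranteed by the first bullet of Definition \ref{dfn:TAG}, and that $\init$ and $\ter$ behave consistently along a directed walk.

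The only genuinely careful point — the ``main obstacle,'' though it is mild — is verifying the strictness $C_{\mathcal{T}}(x) \subsetneq C_{\mathcal{T}}(y)$ rather than merely $C_{\mathcal{T}}(x) \subseteq C_{\mathcal{T}}(y)$. This is where the definition of a phylogenetic tree is used: since $\{x,y\}\in E(T)$ with $x <_T y$, the vertex $y$ has $x$ as a child and, being internal, has degree at least two as a root or at least three otherwise, so $y$ has another child $z \neq x$; then $\mathcal{L}(T_z)$ is nonempty and disjoint from $\mathcal{L}(T_x)$, giving a leaf label in $C_{\mathcal{T}}(y) \setminus C_{\mathcal{T}}(x)$. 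With strictness in hand, the cardinality argument of the previous paragraph closes the proof. I would present the lemma's proof in essentially these three moves: (i) edges strictly shrink $f$-labels, (ii) a cycle would force $f(u_0) \subsetneq f(u_0)$, (iii) contradiction.
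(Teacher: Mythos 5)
Your proposal is correct and follows exactly the paper's argument: the proof given in the paper is the one-line observation that $f(\ter(e)) \subset f(\init(e))$ for every edge $e$, which is precisely your strictly decreasing potential along edges. Your additional verification that the inclusion is strict (via the degree condition guaranteeing a second child) is a welcome elaboration of a point the paper leaves implicit, but it is the same proof.
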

\begin{proof} Stems from the fact that for each edge $e \in E$, $f(\ter(e)) \subset f(\init(e))$. \qed \end{proof}

\paragraph{Comparison with Smith et al.'s TAG.} 
In  \cite{Smith:PloSCB:2013}, Smith et al.~define their TAG procedurally, as follows.  Let $\mathcal{P}$ be a collection of phylogenetic trees and $S = \bigcup_{\T\in \Prof} \mathcal{L}(\T)$. Let $D = (U,E)$ be a directed graph along with an injective vertex-labeling function $f: U \rightarrow 2^S$. Initially, $D$ has a vertex $f^{-1}(S)$, and $|S|$ vertices $f^{-1}(\{s\})$, for each $s \in S$.  Next, Smith et al.'s method process each input tree $\mathcal{T} = (T,\varphi) \in \mathcal{P}$, in some order, and does the following:

\begin{enumerate}
        \item Map each vertex $v \in \mathcal{L}(T)$ to vertex $u \in U$ where $\varphi(v) = f(u)$. 
        \item \label{step:add} Map each vertex $v \in I(T)$ to the vertex $u \in U$, where $C_{\mathcal{T}}(v) \cap f(u) \neq \phi$, $\mathcal{L}(\mathcal{T}) \setminus C_{\mathcal{T}}(v) \cap f(u) = \phi$, and $C_{\mathcal{T}}(v) \cap S \setminus f(u) = \phi$. If no such $u$ exists, then add new vertex $u$ with $f(u) := C_{\mathcal{T}}(v)$ in $D$.
        \item In the case of a vertex $v \in I(T)$ mapping to multiple vertices in $D$, for each such $t$ vertices $u_1,...,u_t \in U$, where for each $j \in \{1, \dots, t-1\}$ there exists $e' \in E$ such that $\ter(e')=u_j$ and $\init(e')=u_{j+1}$, discard all mapping of $v$ to $u_2,...,u_t$, except $u_1$. Note that $v \in V(T)$ can still be mapped to multiple vertices in $D$. For example, vertex $v$ of $T_3$ in Fig. \ref{fig:mytag} is mapped to vertices $f^{-1}(\{a,b,c\})$ and $f^{-1}(\{a,b,d\})$ of $D_1$ in Fig. \ref{fig:cont}.
        \item For edge $e = \{x,y\} \in E(T)$, add directed edges in $D$ from all mappings of $x$ to all mappings of $y$.
\end{enumerate}

Observe that Smith et al.'s definition of the TAG coincides with Definition \ref{dfn:TAG} when the input trees have completely overlapping leaf label sets; however, it differs when the input trees have partially overlapping leaf label sets, as we discuss next.

Notice that Step~\ref{step:add} first tries to map a vertex $v \in I(T)$ to a vertex $u \in U$ for which $C_{\mathcal{T}}(v) \subseteq f(u)$, and $f(u)$ does not have any label of $\mathcal{L}(T)$ other than that of $C_{\mathcal{T}}(v)$. If no suitable match exists, then a new vertex $f^{-1}(C_{\mathcal{T}}(v))$ is added to $D$.  As a result, the set of vertices in the TAG that this procedure creates can depend on the order of input trees. Fig. \ref{fig:cont} illustrates how changing the order of input trees can lead to different TAGs for the input trees of Fig. \ref{fig:mytag}. When $\mathcal{T}_3$ is processed after $\mathcal{T}_1$ and $\mathcal{T}_2$, the vertex $v$ of $\mathcal{T}_3$ maps to $f^{-1}(\{a,b,c\})$ and $f^{-1}(\{a,b,d\})$ in Step \ref{step:add}. On the other hand, processing $\mathcal{T}_3$ before $\mathcal{T}_1$ and $\mathcal{T}_2$, necessitates the creation of $f^{-1}(\{a,b\})$ in the resulting TAG.

Smith et al. discussed the possibility of order dependence of their TAG and addressed it through a post-processing procedure \cite{Smith:PloSCB:2013}. For the given collection of input trees and the TAG that results from the first round of processing, the post-processing procedure recomputes the mapping of each internal vertex of the input tree following Step 2. If the new mapping of an internal vertex of the input tree differs from the old mapping, then the mapping is updated. Edges of the resulting TAG that correspond to the outdated mapping are removed and edges for the new mapping are added. For example, there will be no change in TAG $D_1$ after applying post-processing procedure. On the contrary, post-processing will map $v \in I(\mathcal{T}_3)$ to $f^{-1}(\{a,b,c\})$ and $f^{-1}(\{a,b,d\})$ in $D_2$. This new mapping will cause adding directed edges, 1) from $f^{-1}(\{a,b,c,d,e\})$ to $f^{-1}(\{a,b,c\})$ and $f^{-1}(\{a,b,d\})$, 2) from $f^{-1}(\{a,b,c\})$ to $f^{-1}(\{a\})$ and $f^{-1}(\{b\})$, and 3) from $f^{-1}(\{a,b,d\})$ to $f^{-1}(\{a\})$ and $f^{-1}(\{b\})$ in $D_2$. Let $D_2'$ be the resulting TAG after applying post-processing on $D_2$. Clearly, $D_1$ and $D_2'$ are different. We note that since the post-processing is inadequate for overcoming order-dependence, an algorithm for pre-processing of input trees is in development\footnote{S. A. Smith, J. W. Brown, and C. E. Hinchliff (Department of Ecology and Evolutionary Biology, University of Michigan, Ann Arbor), personal communication.}. In contrast to Smith et al.'s TAG \cite{Smith:PloSCB:2013}, our TAG (in Definition 1) is independent of the order of input trees.

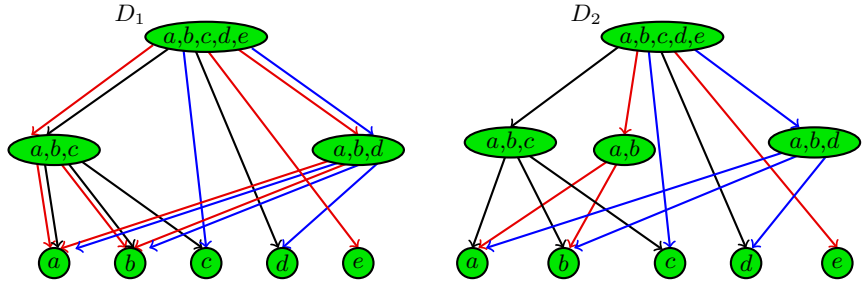
\begin{figure}
    \centering
    \begin{tikzpicture}

        \draw [thick,->,red!90!black] (1.3,2.89) to (-0.3,1.7);  
        \draw [thick,->] (1.5,2.84) to (-0.1,1.7);

        \draw [thick,->,red!90!black] (-0.25,1.5) to (-0.05,0.2);  
        \draw [thick,->] (-.15,1.5) to (0.05,0.2);

        \draw [thick,->,red!90!black] (-0.05,1.5) to (0.95,0.2);  
        \draw [thick,->] (.05,1.5) to (1.05,0.2);

        \draw [thick,->] (.15,1.5) to (1.95,0.2);   
        \draw [thick,->] (1.85,2.84) to (2.95,0.2);   
        \draw [thick,->,blue] (1.7,2.84) to (2,0.2);   

        \draw [thick,->,red!90!black] (2.4,2.85) to (4,1.7); 
        \draw [thick,->,blue!] (2.6,2.85) to (4.2,1.7);

        \draw [thick,->,red!90!black] (4.05,1.5) to (0.08,0.19);  
        \draw [thick,->,blue] (4.25,1.5) to (0.28,0.19);

        \draw [thick,->,red!90!black] (4.3,1.5) to (1.05,0.2);  
        \draw [thick,->,blue] (4.5,1.5) to (1.25,0.2);

        \draw [thick,->,blue] (4.45,1.5) to (3,0.2);  

        \draw [thick,->,red!90!black] (2,2.84) to (4,0.2); 

        \draw [thick,fill=green!90!black] (0,1.5) ellipse (6mm and 2mm);
        \node at (0,1.5) {$a$,$b$,$c$};

        \draw [thick,fill=green!90!black] (2,3) ellipse (8mm and 2mm);
        \node at (2,3) {$a$,$b$,$c$,$d$,$e$};
        \node at (1,3.3) {$D_1$};

        \draw [thick,fill=green!90!black] (4,1.5) ellipse (6mm and 2mm);
        \node at (4,1.5) {$a$,$b$,$d$};

        \draw [thick,fill=green!90!black] (0,0) circle [radius=0.2];
        \draw [thick,fill=green!90!black] (1,0) circle [radius=0.2];
        \draw [thick,fill=green!90!black] (2,0) circle [radius=0.2];
        \draw [thick,fill=green!90!black] (3,0) circle [radius=0.2];
        \draw [thick,fill=green!90!black] (4,0) circle [radius=0.2];

        \node at (0,0) {$a$};
        \node at (1,0) {$b$};
        \node at (2,0) {$c$};
        \node at (3,0) {$d$};
        \node at (4,0) {$e$};

        \draw [thick,->,red!90!black] (7.7,3) to (7.5,1.7);  
        \draw [thick,->,red!90!black] (7.5,1.5) to (5.58,0.2);  
        \draw [thick,->,red!90!black] (7.5,1.5) to (6.78,0.2);  
        \draw [thick,->,red!90!black] (8,3) to (10.3,0.2);  

        \draw [thick,->,black] (7.6,3) to (6,1.8);  
        \draw [thick,->,black] (6,1.6) to (5.5,0.2);  
        \draw [thick,->,black] (6,1.6) to (6.7,0.2);  
        \draw [thick,->,black] (6,1.6) to (8,0.18);  
        \draw [thick,->,black] (7.9,3) to (9.1,0.2);  

        \draw [thick,->,blue] (8.2,3) to (9.8,1.8);  
        \draw [thick,->,blue] (10,1.6) to (5.68,0.2);  
        \draw [thick,->,blue] (10.2,1.6) to (6.84,0.2);  
        \draw [thick,->,blue] (10.3,1.6) to (9.18,0.2);  
        \draw [thick,->,blue] (7.8,3) to (8.1,0.2);  

        \draw [thick,fill=green!90!black] (6,1.6) ellipse (6mm and 2mm);
        \node at (6,1.6) {$a$,$b$,$c$};

        \draw [thick,fill=green!90!black] (7.5,1.5) ellipse (4mm and 2mm);
        \node at (7.5,1.5) {$a$,$b$};

        \draw [thick,fill=green!90!black] (8,3) ellipse (8mm and 2mm);
        \node at (8,3) {$a$,$b$,$c$,$d$,$e$};
        \node at (7,3.3) {$D_2$};

        \draw [thick,fill=green!90!black] (10,1.6) ellipse (6mm and 2mm);
        \node at (10,1.6) {$a$,$b$,$d$};

        \draw [thick,fill=green!90!black] (5.5,0) circle [radius=0.2];
        \draw [thick,fill=green!90!black] (6.7,0) circle [radius=0.2];
        \draw [thick,fill=green!90!black] (8.1,0) circle [radius=0.2];
        \draw [thick,fill=green!90!black] (9.1,0) circle [radius=0.2];
        \draw [thick,fill=green!90!black] (10.3,0) circle [radius=0.2];

        \node at (5.5,0) {$a$};
        \node at (6.7,0) {$b$};
        \node at (8.1,0) {$c$};
        \node at (9.1,0) {$d$};
        \node at (10.3,0) {$e$};
    \end{tikzpicture}
  \caption{Following \cite{Smith:PloSCB:2013}, if the input trees from Fig. \ref{fig:mytag} are processed in the order $\mathcal{T}_1$, $\mathcal{T}_2$, and $\mathcal{T}_3$, the resulting TAG is $D_1$; changing the order of input trees, and processing them as $\mathcal{T}_3$, $\mathcal{T}_1$, and $\mathcal{T}_2$, results into $D_2$, which is different from $D_1$. Again, the edges of both TAGs are colored so to correspond to the input trees; the internal vertices are labeled for clarity.}\label{fig:cont}
  \end{figure} 

The input trees in \cite{Smith:PloSCB:2013} always include a taxonomy tree, which contains all of the leaf labels from the input trees and could be a star tree (i.e., all the leaf nodes connecting to a root node) if no taxonomic classification is available. Here we study the TAG in the standard supertree framework, so we do not assume that a taxonomy tree is included.

\section{Constructing a TAG}
\label{constTAG}
We now present an algorithm for constructing the TAG for a collection of phylogenetic trees. The algorithm first collects clusters by reading through the input trees and making each unique cluster a node in the TAG. It then adds edges between nodes in the TAG.

Let $\mathcal{P}$ be the input collection of $k$ phylogenetic trees. Let $S = \bigcup_{\mathcal{T}\in \mathcal{P}} \mathcal{L}(\mathcal{T})$ and $n = |S|$.

\subsection{Building TAG Nodes}
\label{nodes}
We define a bijection $g$
that maps each taxon in $S$ to a unique number in $\{1,2, \dots ,n\}$. For each tree $\mathcal{T} = (T,\varphi)$ in $\mathcal{P}$, the \emph{bit-string} of $v \in V(T)$ is a  binary string of length $n$, where the $i$th bit is 1, if $g^{-1}(i) \in C_{\mathcal{T}}(v)$, or 0, otherwise. We collect bit-strings from the input trees in a list $\mathcal{A}$ and construct a TAG node for each unique bit-string.

\paragraph{Collecting bit-strings:} The algorithm starts by traversing each input tree in post-order.  When,  after  traversing its subtree, a vertex $v$ is visited, we compute $v$'s bit-string as follows.  If $v$ is a leaf with label $s \in S$, the bit string for $v$ is simply the string of length $n$ with a 1 at position $g(s)$ and 0s everywhere else.
If $v$ is an internal node, its bit-string is the \emph{OR} of the bit-strings of $v$'s children. After each bit-string is computed, it is stored in $\mathcal{A}$.  When the traversals of all $k$ input trees are complete, $\mathcal{A}$ has $O(nk)$ bit-strings. 

\paragraph{Filtering unique bit-strings:} To remove duplicates from $\mathcal{A}$, we first sort it using \emph{radix sort} \cite[Chapter 8]{Cormen:1996:IA}.  Given $N$  $b$-bit numbers and any positive integer $r \le b$, radix sort sorts these numbers in $O((b/r)(N + 2^r))$ time.  In our case, $b = n$ and $N = nk$, giving a running time of $O((n/r)(nk + 2^r))$.  This quantity is minimized when $r = \log (nk)$, giving a running time of $O(n^2 k/\log(nk))$.

After sorting $\mathcal{A}$, we remove its duplicate bit-strings in a single linear scan.  This can be done in $O(n^2 k)$ time through standard methods.

\paragraph{TAG nodes:}  We construct a node in the TAG for each unique bit-string in $\mathcal{A}$. For bit-strings corresponding to the leaf nodes, we also associate the appropriate label from $S$ using function $g$.

\subsection{Adding Edges to the TAG}
\label{edges}
Once the vertices of input trees have been mapped to the vertices of TAG, we add directed edges to the TAG. We traverse each input tree in post-order. When a tree traversal visits an internal vertex $v$ of an input tree, we find the bit-strings of $v$ and $v$'s children in $\mathcal{A}$ and locate the nodes corresponding to them in the TAG. We then add directed edges from the TAG node corresponding to $v$ to the TAG nodes for $v$'s children.

\begin{theorem} For a given collection of $k$ phylogenetic trees on $n$ labels, the TAG can be built in $O(n^2k)$ time. \end{theorem}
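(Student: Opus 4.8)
The plan is to bound the running time of each phase of the construction algorithm of Sections~\ref{nodes} and~\ref{edges} separately and observe that the sum is dominated by the $O(n^2k)$ term. Throughout I would work in the model implicit in the paper's radix-sort analysis, where a single operation on an $n$-bit string (computing an \emph{OR}, comparing two strings, copying one) costs $O(n)$. I would also record the combinatorial facts I rely on: since every internal vertex of a phylogenetic tree has degree at least three (except the root, of degree at least two), each input tree $\mathcal{T}=(T,\varphi)$ has $|V(T)| = O(|\mathcal{L}(T)|) = O(n)$ and hence $|E(T)| = O(n)$; consequently the list $\mathcal{A}$ collects $O(nk)$ bit-strings in total, and by Definition~\ref{dfn:TAG} the final TAG has $O(nk)$ nodes and $O(nk)$ edges.

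First I would analyze the node-building phase. In the post-order traversal of one tree, each leaf gets its bit-string in $O(n)$ time, and each internal vertex $v$ gets the \emph{OR} of its children's bit-strings; the total cost of these \emph{OR}s is $O(n)$ times the number of (child, parent) incidences, i.e.\ $O(n\cdot|E(T)|)=O(n^2)$ for one tree. Summing over the $k$ trees, and adding the cost of writing each of the $O(nk)$ bit-strings into $\mathcal{A}$, gives $O(n^2k)$. Filtering out duplicates costs $O((n/r)(nk+2^r))$ for the radix sort, which at $r=\log(nk)$ is $O(n^2k/\log(nk))$, followed by a single linear scan comparing consecutive $n$-bit strings in $O(n)$ time each, i.e.\ $O(n^2k)$; creating one TAG node per surviving bit-string and attaching the $O(n)$ leaf labels via $g$ is $O(nk)$. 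So the entire first phase runs in $O(n^2k)$ time.

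Next I would handle edge insertion, which is where the one subtlety lies: we must not re-identify from scratch which TAG node an input-tree vertex maps to, because a binary search in the sorted $\mathcal{A}$ would cost $\Theta(n\log(nk))$ per lookup and push the overall bound up to $O(n^2k\log(nk))$. The fix is to thread provenance through the previous phase: when a vertex's bit-string is first computed, keep a pointer from the vertex to that string; carry these (bit-string, origin) records through the radix sort; and during the de-duplication scan rewrite each record's origin pointer to the canonical representative's TAG node. Then each of the $O(nk)$ input-tree vertices points at its unique TAG node in $O(1)$ time, and a second post-order traversal of each tree adds, at each internal vertex $v$, one directed edge to each child's node --- $O(|E(T)|)$ edges per tree, $O(nk)$ edges and $O(nk)$ time overall, consistent with there being exactly one TAG edge per input-tree edge. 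Adding the two phases yields $O(n^2k)+O(nk)=O(n^2k)$.

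The main obstacle is precisely this bookkeeping: every other step is a routine traversal or a direct invocation of the radix-sort bound, but a careless implementation of "locate the node in the TAG" breaks the claimed complexity. I would therefore emphasize in the write-up that the sort is performed on (bit-string, origin) records and that the de-duplication pass simultaneously removes duplicates and redirects each origin to its canonical node, so that deduplicating clusters and locating TAG nodes both happen within the $O(n^2k)$ budget.
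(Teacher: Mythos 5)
Your proposal is correct and follows essentially the same phase-by-phase accounting as the paper, whose own proof is just the two lines ``collecting bit-strings and then sorting them requires $O(n^2k)$ time; the remaining steps take time linear in the size of the input.'' Your extra observation about threading provenance pointers through the sort so that each input-tree vertex locates its TAG node in $O(1)$ time is a genuine implementation detail the paper glosses over (a per-vertex binary search in $\mathcal{A}$ would indeed cost $O(n^2k\log(nk))$), and it strengthens rather than changes the argument.
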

\begin{proof} Collecting  bit-strings 
and then sorting them 
requires $O(n^2k)$ time.
The remaining steps take time linear in the size of the input. \qed \end{proof}

\section{Finding Consensus Trees using the TAG}
\label{consensus}
Let $\mathcal{P}$ be a collection of $k$ input phylogenetic trees with completely overlapping leaf label set of size $n$.
The \emph{strict consensus tree} for $\mathcal{P}$ is the tree whose clusters are precisely those that appear in all the trees in $\mathcal{P}$.  The \emph{majority-rule consensus tree} for $\mathcal{P}$ is the tree whose clusters are precisely those that appear in more than half (i.e., the majority) of the trees in $\Prof$.
Here we show how to build the majority-rule consensus trees for $\Prof$ from the TAG for $\Prof$.  We then outline the modifications needed to compute the strict consensus tree.

Algorithm \textsc{MajorityRuleTree} (Algorithm~\ref{alg:Majority}) builds the majority-rule tree for $\Prof$ by traversing the TAG $D$ for $\mathcal{P}$.  Let $D  = (U,E)$ and $f$ be the vertex-labeling function.  We assume that each vertex $v$ in $D$ stores the \emph{cardinality} of $v$ --- i.e., the number of taxa in $f(v)$ --- along with  $\cnt(v)$, the number of times cluster $f(v)$ appears in a tree in $\Prof$. We also assume that multiple edges between the same two vertices are replaced by a single edge. The next observation follows from the fact that the input trees have completely overlapping leaf label sets.

\begin{observation}\label{jobs:rootD}
$D$ has precisely one vertex $s$ with in-degree zero.
\end{observation}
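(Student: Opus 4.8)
The plan is to pin down the unique in-degree-zero vertex of $D$ as the one whose label is the whole taxon set $S$. I will (i) show there is a vertex $s\in U$ with $f(s)=S$, (ii) check that $s$ has in-degree zero, and (iii) show every other vertex of $D$ has in-degree at least one. Steps (i)--(iii) together establish the observation. Note that the hypothesis that the input trees have completely overlapping leaf-label sets enters only through the fact that $C_{\mathcal T}(\rt(T))=S$ for every $\mathcal T=(T,\varphi)\in\mathcal P$.

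For (i) and (ii): fix any $\mathcal T=(T,\varphi)\in\mathcal P$. Since $\rt(T)$ has at least two children, $E(T)$ contains an edge $e=\{x,\rt(T)\}$ with $x<_T\rt(T)$, and the second bullet of Definition~\ref{dfn:TAG} supplies an $e'\in E$ with $f(\init(e'))=C_{\mathcal T}(\rt(T))=S$. Hence $s:=\init(e')\in U$ and $f(s)=S$; it is the only vertex with this label because $f$ is injective. If some $e''\in E$ had $\ter(e'')=s$, then by (the proof of) Lemma~\ref{lemma1} we would get $S=f(s)=f(\ter(e''))\subsetneq f(\init(e''))$, contradicting the first bullet of Definition~\ref{dfn:TAG}, which forces $f(\init(e''))$ to be a cluster of some tree in $\mathcal P$ and hence a subset of $S$. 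So $s$ has in-degree zero.

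For (iii): let $v\in U$ with $f(v)\ne S$. By the first bullet of Definition~\ref{dfn:TAG}, $f(v)=C_{\mathcal T}(x)$ for some $\mathcal T=(T,\varphi)\in\mathcal P$ and some $x\in V(T)$; since $C_{\mathcal T}(\rt(T))=S\ne f(v)$, we have $x\ne\rt(T)$, so $x$ has a parent $y$ in $T$ and $\{x,y\}\in E(T)$ with $x<_T y$. The second bullet of Definition~\ref{dfn:TAG} then gives an $e'\in E$ with $f(\ter(e'))=C_{\mathcal T}(x)=f(v)$, and injectivity of $f$ forces $\ter(e')=v$. Thus $v$ has in-degree at least one, and $s$ is the unique in-degree-zero vertex of $D$.

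The argument is short, and the only points needing care --- what I would flag as the main obstacle --- are bookkeeping issues stemming from the fact that Definition~\ref{dfn:TAG} does not enumerate the vertices of $D$ directly: one has to deduce the membership $s\in U$ from the mandatory edge incident to $\rt(T)$ (rather than simply asserting it), and in step (iii) one must use injectivity of $f$ to be sure the mandated edge $e'$ lands on the \emph{particular} vertex $v$ at hand and not on some other vertex carrying the same label. No properties of clusters beyond the containment already recorded in Lemma~\ref{lemma1} are needed.
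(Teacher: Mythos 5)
Your proof is correct and follows the same (essentially only) route the paper has in mind: the unique source is the vertex labeled by the full taxon set $S$, which exists because every root's cluster equals $S$ under the complete-overlap hypothesis, and every other vertex inherits an incoming edge from the tree edge to its parent. The paper states this observation with only a one-line justification, so your write-up is simply a careful elaboration of that same argument, including the correct use of injectivity of $f$ and the proper containment from Lemma~\ref{lemma1}.
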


Let $v$ be a node in $D$.  Then, $v$ is a \emph{majority node} if $\cnt(v) > k/2$.  The clusters associated with majority nodes  are precisely the clusters of the majority-rule tree. Let the nodes of the majority-rule tree correspond to the majority nodes of $D$. Next we develop an approach to hook up these nodes to actually build the majority-rule tree.

Let $u$ and $v$ be nodes of $D$.  Then, $v$ is a \emph{majority ancestor} of $u$ if $v$ is a majority node, and there is directed path from $v$ to $u$ in $D$.
Algorithm~\ref{alg:Majority} is based on the following observation (parts of which were noted in \cite{Amenta:2003:ALM}).

\begin{observation}
\label{obs:majObs}
Let $u$ and $v$ be majority nodes in $D$.  Then,
\begin{enumerate}[(i)]
\item
if there is a directed path from a majority node $u$ to a majority node $v$ in $D$, then $f(v) \subset f(u)$, and
\item
\label{obs:pt1}
if $v$ is the parent of $u$ in the majority-rule tree for $\Prof$, then
$v$ is the (unique) minimum-cardinality majority ancestor of $u$; further, $f(u) \subset f(v)$.
\end{enumerate}
\end{observation}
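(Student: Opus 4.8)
The plan is to prove the two parts of Observation~\ref{obs:majObs} by translating back and forth between the TAG and the underlying input trees, using Definition~\ref{dfn:TAG} and the key monotonicity fact already recorded in Lemma~\ref{lemma1}: along every edge $e \in E$ we have $f(\ter(e)) \subsetneq f(\init(e))$. For part~(i), suppose there is a directed path from $u$ to $v$ in $D$. Since $D$ is acyclic, this path has positive length, so it decomposes into at least one edge; applying the containment $f(\ter(e)) \subsetneq f(\init(e))$ along each edge of the path and chaining the inclusions gives $f(v) \subsetneq f(u)$ at once. Note this part does not even use that $u$ and $v$ are majority nodes — it is a general fact about the TAG.

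For part~(ii), the nontrivial direction, I would argue as follows. Let $u$ be a majority node and let $w$ be its parent in the majority-rule tree $R$ for $\Prof$; by definition of $R$, $f(w)$ is the smallest cluster in $\mathcal{H}(R)$ properly containing $f(u)$, and since $R$'s clusters are exactly the majority clusters, $f(w) \supsetneq f(u)$ with $w$ a majority node. First I would show that $w$ is a majority ancestor of $u$, i.e.\ that there is a directed path from $w$ to $u$ in $D$. For this, pick any input tree $\mathcal{T} = (T,\varphi) \in \Prof$ whose cluster set contains $f(u)$ (one exists since $\cnt(u) > k/2 \ge 1$). Because the leaf label sets are completely overlapping, $f(w)$ and $f(u)$ are nested clusters on the same label set; I would use the fact that a majority cluster of $\Prof$ is compatible with every cluster of every input tree to locate, inside $T$, a chain of vertices $w = x_0 >_T x_1 >_T \cdots >_T x_m = u$ realizing the clusters along the path from the vertex of $T$ with cluster $f(w)$ down to the vertex with cluster $f(u)$ — but the cleaner route is to observe that it suffices to produce \emph{some} input tree edge path, not one inside a single tree. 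Concretely: since $f(u) \subsetneq f(w)$ and both are majority clusters, and no majority cluster strictly between them exists, I claim every directed path in $D$ from $w$ to $u$ (once we show one exists) passes only through non-majority nodes internally. Existence of a directed path from $w$ to $u$: the vertex $r := f^{-1}(S)$ reaches every node of $D$ by Observation~\ref{jobs:rootD} and connectedness; more usefully, in any input tree $\mathcal{T}$ containing the cluster $f(u)$, walk from $\rt(T)$ down to the vertex with cluster $f(u)$, and by Definition~\ref{dfn:TAG} each tree edge on this walk gives a TAG edge, so the TAG contains a directed path from $f^{-1}(S)$ to $u$ through clusters of $\mathcal{T}$; since $f(w)$ is a cluster of every input tree — being a majority cluster on a complete label set, it is in particular realized in, say, the tree whose clusters all strictly contain it or not — I instead take $\mathcal{T}$ to be one of the majority of trees containing $f(w)$, note $f(u) \subsetneq f(w)$ forces the $f(u)$-vertex (if present in that $\mathcal{T}$) to lie below the $f(w)$-vertex, and handle the case $f(u) \notin \mathcal{H}(\mathcal{T})$ separately by passing through an intermediate cluster.

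Once a directed path from $w$ to $u$ is established, $w$ is a majority ancestor of $u$, so $u$ has at least one majority ancestor; let $v$ be one of minimum cardinality. By part~(i), $f(v) \subsetneq f(u)$ is false — rather $f(u) \subsetneq f(v)$ — so $f(v)$ is a majority cluster properly containing $f(u)$, hence $f(w) \subseteq f(v)$ by minimality of $f(w)$ among such clusters. Conversely $w$ is a majority ancestor, so $|f(w)| \ge |f(v)|$ by the choice of $v$, giving $f(w) = f(v)$ and $w = v$ by injectivity of $f$; this simultaneously proves $v$ is the unique minimum-cardinality majority ancestor and that $f(u) \subsetneq f(v)$.

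The main obstacle is the one flagged above: cleanly establishing that $w$ (the majority-rule parent of $u$) is actually a majority \emph{ancestor} of $u$ in $D$ — i.e.\ that a directed $w\!\rightsquigarrow\!u$ path exists — rather than merely an ancestor in an abstract sense. This requires the right bookkeeping with Definition~\ref{dfn:TAG}'s edge-preservation clause, choosing an input tree that realizes the relevant cluster, and, when $f(u)$ does not appear in that particular tree, composing paths through an intermediate cluster that does appear; this is where a careful sub-lemma (e.g.\ "for nested majority clusters $A \subsetneq B$ there is a directed path $f^{-1}(B) \rightsquigarrow f^{-1}(A)$ in $D$") pays off, and it is essentially the content borrowed from \cite{Amenta:2003:ALM}.
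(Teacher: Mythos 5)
The paper states this Observation without proof (it only credits \cite{Amenta:2003:ALM} for parts of it), so your argument has to stand on its own. Part (i) does: every TAG edge $e$ arises from an input-tree edge $\{x,y\}$ with $x <_T y$, so $f(\ter(e)) = C_{\mathcal{T}}(x) \subsetneq C_{\mathcal{T}}(y) = f(\init(e))$, and chaining these inclusions along a nontrivial directed path gives $f(v)\subsetneq f(u)$; as you note, majority status plays no role there.

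Part (ii) has the right skeleton --- show the majority-rule parent $w$ of $u$ is a majority ancestor of $u$ in $D$, then squeeze any minimum-cardinality majority ancestor $v$ between $f(u)\subsetneq f(v)$ (from part (i)) and the minimality of $f(w)$ to get $v=w$ --- but the step you yourself flag as the main obstacle is exactly the step you never close, and the sketch you offer for it goes astray: the claim that ``$f(w)$ is a cluster of every input tree'' is false (a majority cluster need only occur in more than $k/2$ of the trees), and the fallback of ``passing through an intermediate cluster'' when $f(u)\notin\mathcal{H}(\mathcal{T})$ is neither necessary nor substantiated. The missing idea is a pigeonhole argument: $f(u)$ and $f(w)$ each occur in more than $k/2$ of the $k$ input trees, so some single tree $\mathcal{T}=(T,\varphi)$ contains both as clusters. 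In $T$, the vertices realizing $f(u)$ and $f(w)$ are both ancestors of any leaf labelled by an element of $f(u)$, hence comparable under $\leq_T$, and $f(u)\subsetneq f(w)$ forces the $f(u)$-vertex strictly below the $f(w)$-vertex; the tree path between them maps edge-by-edge, via the second bullet of Definition~\ref{dfn:TAG}, to a directed path in $D$ from $w$ to $u$. The same pigeonhole fact is also what makes the family of majority clusters laminar, which you use implicitly both when calling $f(w)$ ``the smallest'' majority cluster properly containing $f(u)$ and when concluding $f(w)\subseteq f(v)$ ``by minimality'' (that step needs $f(v)$ and $f(w)$ to be comparable). With that sub-lemma supplied, the remainder of your part (ii) --- the cardinality squeeze and uniqueness via injectivity of $f$ --- goes through.
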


Let $u$ be a node in $D$.  The \emph{most recent majority ancestor of $u$} is the unique minimum-cardinality majority ancestor $u$.   For each vertex $u \in U$, our algorithm maintains two variables: $p(u)$, a reference to the smallest cardinality majority ancestor of $u$ seen thus far, and $m(u)$, the cardinality of $p(u)$.  Initially, every node $u$, except the node $s$ of in-degree zero, has $p(u) = s$, representing initial best estimate of the most recent majority ancestor of $u$.  The algorithm revises this estimate repeatedly until it converges on the correct value.  After this process is complete, it is now a simple matter to assemble the majority-rule tree, since, for each majority node $u$, $p(u)$ points to $u$'s parent in that tree.

Algorithm \ref{alg:Majority} processes the nodes of $D$ according to topological order --- this ordering exists because $D$ is acyclic (from Lemma 1).  When the algorithm visits a node $u$, it examines each successor $v$, and considers two possibilities.  If $u$ is a majority node, then $u$ may become the new value of $p(v)$, while if $u$ is not a majority node, $p(u)$ may become the new value of $p(v)$.  By Observation~\ref{obs:majObs}, the decision depends solely on node cardinalities.

\begin{figure}
\vspace{-0.2cm}
\begin{algorithm}[H]
\SetAlgoLined
\SetNoFillComment
\DontPrintSemicolon
\KwIn{The TAG $D = (U,E)$ for a collection $\mathcal{P}$ of trees over the same leaf set.}
\KwOut{The majority-rule tree for $\Prof$.}
Let $s$ be the unique vertex in $D$ with in-degree $0$\;
\ForEach{$u \in V -s$}{
	$m(u) = n$ ;
	$p(u) = s$
}
Perform a topological sort of $D-s$ \;
Let $M \subseteq U$ be the set of majority nodes in $D$ \;
\ForEach{$u \in U -s $ in topological order\label{step:mainLoop}}{
	\If {$u \in M$}{
		$\mu = |f(u)|$ ;
		$\pi = u$
	}\Else{
		$\mu = m(u)$ ;
		$\pi = p(u)$
	}
	\ForEach{$v \in U$ such that $(u,v) \in E$}{
		\If{$m(v) > \mu$}{
			$m(v) = \mu$ ;
			$p(v) = \pi$ \label{step:endMainLoop}
	}
	}
}
Let $T$ be the tree with vertex set $M$, where, for every $u \in M$, the parent of $u$ in $T$ is $p(u)$ \;
Let $\varphi$ be the function that maps each leaf $u$ of $T$ to $f(u)$ \;
\Return $(T, \varphi)$ \;
\label{alg:Majority}
\caption{\textsc{MajorityRuleTree}($D$)\label{alg:Majority}}
\end{algorithm}
\vspace{-0.2cm}
\end{figure}

\begin{theorem} Given the TAG $D$ for a collection $\Prof$ of $k$ phylogenetic trees on the same $n$ leaves, the majority-rule consensus tree of $\Prof$ can be computed in $O(nk)$ time. \end{theorem}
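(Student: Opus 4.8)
The plan is to bound the size of the TAG, then charge each line of Algorithm~\ref{alg:Majority} against that bound, and finally verify that the pointers $p(\cdot)$ the algorithm computes are exactly the ones that describe the majority-rule tree. For the size bound, every tree in $\Prof$ is a phylogenetic tree on the same $n$ leaves and so has at most $2n-1$ vertices and $2n-2$ edges; since each vertex of $D$ equals $f^{-1}(C)$ for a cluster $C$ of some input tree, and (parallel edges having been merged, as assumed) each edge of $D$ is determined by the ordered pair $\bigl(f^{-1}(C_\T(y)),f^{-1}(C_\T(x))\bigr)$ arising from an edge $\{x,y\}$ with $x<_T y$ of some input tree, both $|U|$ and $|E|$ are $O(nk)$.

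Next I would walk line by line through Algorithm~\ref{alg:Majority}. Finding the in-degree-zero vertex $s$ (it is unique by Observation~\ref{jobs:rootD}) and running the initialization loop cost $O(|U|+|E|)$; the topological sort of $D-s$ costs $O(|U|+|E|)$ and is legitimate because $D$ is acyclic (Lemma~\ref{lemma1}); obtaining $M$ is one pass over $U$ that tests $\cnt(v)>k/2$ using the stored count, i.e.\ $O(|U|)$. The main loop touches each vertex $u$ exactly once: computing the pair $(\mu,\pi)$ is $O(1)$ because $|f(u)|$ is stored at $u$, and each outgoing edge of $u$ triggers an $O(1)$ comparison that may update $(m(v),p(v))$, so the loop runs in $O(|U|+|E|)$ total. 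Reading off $p(u)$ for every $u\in M$ to assemble $T$ and defining $\varphi$ on its leaves is $O(|M|)\subseteq O(|U|)$. Summing, the running time is $O(|U|+|E|)=O(nk)$.

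For correctness I would prove, by induction along the topological order, the invariant that once the loop has visited a node $u\neq s$, $p(u)$ is the most recent (minimum-cardinality) majority ancestor of $u$ and $m(u)=|f(p(u))|$. The initialization is justified because $s\in M$ --- its cluster $S$ occurs in every input tree --- and $s$ is a majority ancestor of every other vertex of $D$. For the inductive step, the majority ancestors of $u$ are exactly those in-neighbours $w$ of $u$ that lie in $M$, together with the majority ancestors of every in-neighbour of $u$; by Observation~\ref{obs:majObs}(i) the majority ancestors of any fixed node are nested under $\subseteq$, so there is a unique one of minimum cardinality, and the relaxation done by each already-finalized in-neighbour $w$ --- supplying $w$ itself when $w\in M$ and its stored pointer $p(w)$ otherwise --- is precisely the minimization that selects it. Observation~\ref{obs:majObs}(ii) then identifies this $p(u)$, for a majority node $u$, as the parent of $u$ in the majority-rule tree, so the tree $(T,\varphi)$ returned is the majority-rule tree of $\Prof$.

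The main obstacle is the correctness invariant rather than the timing: one must check that the topological order guarantees every in-neighbour $w$ of $v$ has already reached its final $(m(w),p(w))$ before it relaxes $(w,v)$, and that ``minimum-cardinality majority ancestor'' is the right quantity to propagate --- both of which lean on Observation~\ref{obs:majObs}. A smaller point worth stating explicitly is that each edge relaxation is genuinely $O(1)$, which is exactly why the statement presupposes that $D$ stores cardinalities and counts at its vertices and has had parallel edges merged.
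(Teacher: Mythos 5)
Your proposal is correct and follows the same route as the paper's own (two-sentence) proof sketch: correctness via Observation~\ref{obs:majObs} with an induction along the topological order, and the $O(nk)$ bound from the fact that $D$ has $O(nk)$ nodes and edges because each of the $k$ input trees on $n$ leaves contributes $O(n)$ clusters and tree edges. The only nitpick is that the uniqueness of the minimum-cardinality majority ancestor does not follow from Observation~\ref{obs:majObs}(i) alone (two majority ancestors of $u$ need not be joined by a directed path; one also needs that intersecting majority clusters are nested), but the paper itself asserts this in Observation~\ref{obs:majObs}(ii), so your reliance on it matches the paper's level of rigor.
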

\begin{proof}[Sketch] Correctness can be proved using Observation~\ref{obs:majObs}.  To bound the running time, note that topological sort takes time linear in the size of $D$, and the main loop (Lines~\ref{step:mainLoop}--\ref{step:endMainLoop}) examines each node and each edge once. Since the number of edges and nodes in $D$ is $O(nk)$,  the claimed time bound follows. \qed \end{proof}

The algorithm for strict consensus tree is similar to Algorithm~\ref{alg:Majority}, with only one significant difference:  instead of dealing with majority nodes, it focuses on \emph{strict} nodes, that is, TAG nodes $u$ such that $\cnt(u) = k$.  We omit the details, and simply summarize the result.

\begin{theorem} Given the TAG for a collection $\Prof$ of $k$ phylogenetic trees on the same $n$ leaves, the strict consensus tree of $\Prof$ can be computed in $O(nk)$ time. \end{theorem}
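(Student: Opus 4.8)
The plan is to show that the strict consensus tree of $\Prof$ is recovered from the TAG $D=(U,E)$ by running the obvious variant of Algorithm~\ref{alg:Majority} in which the set $M$ of majority nodes is replaced by the set $M_s=\{u\in U:\cnt(u)=k\}$ of \emph{strict nodes}, and then proving correctness and an $O(nk)$ running time by mirroring the majority-rule argument. Call $v$ a \emph{strict ancestor} of $u$ if $v\in M_s$ and there is a directed path from $v$ to $u$ in $D$. The heart of the proof is to verify the strict analogues of Observations~\ref{jobs:rootD} and~\ref{obs:majObs}; everything else is essentially the same as before and should be lighter, since strict clusters are exactly the clusters common to all input trees and hence already form the cluster set of a tree.

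First I would record the structural facts. By Observation~\ref{jobs:rootD}, $D$ has a unique in-degree-zero vertex $s$, and since the input trees share their leaf set $S$, $f(s)=S$ is the root cluster of every tree in $\Prof$, so $\cnt(s)=k$ and $s\in M_s$. Moreover every cluster of the strict consensus tree is a cluster of all $k$ input trees, hence (being a cluster of, say, the first input tree and so an endpoint of one of its edges) it occurs as $f(v)$ for some $v\in U$ by the second bullet of Definition~\ref{dfn:TAG}, and that vertex is strict. For the ancestry--containment correspondence: if there is a directed path from a strict node $u$ to a strict node $v$, then $f(v)\subsetneq f(u)$, which is immediate from the proof of Lemma~\ref{lemma1} (each edge $e$ has $f(\ter(e))\subsetneq f(\init(e))$); conversely, if $u,v\in M_s$ with $f(v)\subsetneq f(u)$, then in any single input tree both $f(u)$ and $f(v)$ are clusters with the latter nested in the former, so the vertex realizing $f(v)$ lies below the one realizing $f(u)$, and the images (guaranteed by Definition~\ref{dfn:TAG}) of the tree-path joining them form a directed path from $u$ to $v$ in $D$. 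Thus, among strict nodes, directed reachability in $D$, proper containment of clusters, and the ancestor relation in the strict consensus tree all coincide, and a routine tree argument (exactly as in Observation~\ref{obs:majObs}(\ref{obs:pt1})) gives that for every strict node $u\neq s$, the parent of $u$ in the strict consensus tree is its unique minimum-cardinality strict ancestor in $D$, with uniqueness coming from injectivity of $f$.

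The algorithm then proceeds as in Algorithm~\ref{alg:Majority}: initialize $m(u)=n$ and $p(u)=s$ for all $u\neq s$; topologically sort $D-s$; and in topological order, when visiting $u$, propagate to each successor $v$ the pair $(|f(u)|,u)$ if $u\in M_s$ and $(m(u),p(u))$ otherwise, updating $(m(v),p(v))$ whenever this decreases $m(v)$. The correctness invariant, proved by induction along the topological order, is that once every predecessor of $u$ has been processed, $p(u)$ equals the minimum-cardinality strict ancestor $w^\ast$ of $u$ and $m(u)=|f(w^\ast)|$. For the inductive step, pick a directed path from $w^\ast$ to $u$ and let $(u',u)$ be its last edge. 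If $u'=w^\ast$, then processing the strict node $u'$ propagates $(|f(w^\ast)|,w^\ast)$ to $u$. If $u'\neq w^\ast$, then $w^\ast$ is a strict ancestor of $u'$; and $u'$ cannot itself be strict, for then $f(u)\subsetneq f(u')\subsetneq f(w^\ast)$ would make $u'$ a strict ancestor of $u$ of smaller cardinality than $w^\ast$ --- contradiction. Hence $u'\notin M_s$, so by the inductive hypothesis $u'$ propagates $(m(u'),p(u'))=(|f(w^\ast)|,w^\ast)$. Either way $(m(u),p(u))$ becomes $(|f(w^\ast)|,w^\ast)$; and since every update sets $p(u)$ to an actual strict ancestor, $m(u)$ never drops below $|f(w^\ast)|$. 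After termination, the tree on vertex set $M_s$ with parent map $p$ and leaf-labeling $f$ is the strict consensus tree. For the running time, the TAG has $O(nk)$ vertices and, after merging parallel edges, $O(nk)$ edges; the topological sort is linear in $|U|+|E|$, and the main loop touches each vertex and each edge once, so the total is $O(nk)$. The main obstacle is the correctness argument just sketched --- specifically the reachability-equals-containment step and the exclusion of a strict intermediate predecessor in the induction --- but, as noted, it is milder here than in the majority-rule setting.
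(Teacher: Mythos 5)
Your proposal is correct and follows exactly the route the paper intends: the paper itself gives no proof for this theorem beyond remarking that one replaces majority nodes by strict nodes ($\cnt(u)=k$) in Algorithm~\ref{alg:Majority} and omits the details, which are precisely what you supply (the strict analogue of Observation~\ref{obs:majObs}, the induction along the topological order, and the $O(nk)$ size bound on $D$). Your write-up is in fact considerably more complete than the paper's, and the one lightly glossed step --- that $w^\ast$ is also the \emph{minimum-cardinality} strict ancestor of the intermediate node $u'$, not merely a strict ancestor --- follows by the same one-line minimality argument you already use.
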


\section{Testing Compatibility using the TAG}
\label{compatibility}
Let $\mathcal{T}$ and $\mathcal{T}'$ be two phylogenetic trees on $X$ and $X'$, respectively, where $X \subseteq X'$. We say that $\mathcal{T}'$ \emph{displays} $\mathcal{T}$ if, up to suppressing non-root nodes of degree two, the minimum rooted subtree of $\mathcal{T}'$ that connects the elements of $X$ \emph{refines} $\mathcal{T}$, i.e., $\mathcal{T}$ can be obtained from it by contracting internal edges. \emph{Suppressing a node of degree two} means replacing that node and its incident edges by an edge.

Let $\Prof$ be the input collection of rooted phylogenetic trees.  We say that $\Prof$ is \emph{compatible} if there exists a phylogenetic tree $\mathcal{T}$, called a \emph{compatible supertree for $\T$}, that simultaneously displays every tree in $\Prof$.  A classic result in phylogenetics is that compatibility can be tested in polynomial time \cite{Aho:1981:ITL,Semple:2003:phy}.  In this section, we show that compatibility can be tested directly from the TAG for $\Prof$.

We need some definitions.  As before, we assume that multiple edges between the same two TAG vertices are replaced by a single edge. The \emph{extended TAG} is the graph $\DE$ obtained from $D$ by adding undirected edges between every two vertices $u,v \in U$ such that $f(u)$ and $f(v)$ are clusters corresponding to sibling vertices in some tree in $\Prof$.  $\DE$ is a  \emph{mixed} graph, i.e., a graph that contains both directed and undirected edges. See Fig. \ref{fig:comp}.

\begin{figure}
    \vspace{-0.2cm}
    \centering
     \begin{tikzpicture}
        \draw [thick,black] (1.5,1.3) to (0,0.7);
        \draw [thick,black] (1.5,1.3) to (1.5,-0.1) node [black,below] {\textbf{$c$}};
        \draw [thick,black] (0,0.7) to (0.4,-0.1) node [black,below] {\textbf{$b$}};
        \draw [thick,black] (0,0.7) to (-.5,-0.1) node [black,below] {\textbf{$a$}};
        \node at (0,1.4) {$\mathcal{T}_1$};

        \draw [thick,black] (4.5,1.3) to (3,0.7);
        \draw [thick,black] (4.5,1.3) to (4.5,-0.1) node [black,below] {\textbf{$d$}};
        \draw [thick,black] (3,0.7) to (3.4,-0.1) node [black,below] {\textbf{$b$}};
        \draw [thick,black] (3,0.7) to (2.5,-0.1) node [black,below] {\textbf{$a$}};
        \node at (3,1.4) {$\mathcal{T}_2$};

        \draw [thick,->] (7.1,0.2) to (6.5,-0.8);  
        \draw [thick,->] (7.1,0.2) to (7.7,-0.8);  
        \draw [thick,->] (8.1,1.4) to (7.2,0.4);  
        \draw [thick,->] (8.1,1.4) to (8.9,-0.8);  
        \draw [thick,dashed] (6.5,-1) to (7.7,-1);
        \draw [thick,dashed] (7.1,0.2) to (8.9,-1);
        \draw [thick,dashed] (7.1,0.2) to (10.1,-1);

        \draw [thick,->] (9.9,1.5) to (7.28,0.36);  
        \draw [thick,->] (9.9,1.5) to (10.1,-0.8);  

        \draw [thick,fill=green!90!black] (7.1,0.2) ellipse (3mm and 2mm);  
        \draw [thick,fill=green!90!black] (8.1,1.4) ellipse (3mm and 2mm);      
        \draw [thick,fill=green!90!black] (9.9,1.5) ellipse (3mm and 2mm);      

        \node at (7.4,1.6) {$\DE$};

        \draw [thick,fill=green!90!black] (6.5,-1) circle [radius=0.2];
        \draw [thick,fill=green!90!black] (7.7,-1) circle [radius=0.2];
        \draw [thick,fill=green!90!black] (8.9,-1) circle [radius=0.2];
        \draw [thick,fill=green!90!black] (10.1,-1) circle [radius=0.2];

        \node at (6.5,-1) {$a$};
        \node at (7.7,-1) {$b$};
        \node at (8.9,-1) {$c$};
        \node at (10.1,-1) {$d$};
     \end{tikzpicture}
    \vspace{-0.1cm}
    \caption{Two phylogenetic trees $\mathcal{T}_1$ and $\mathcal{T}_2$ with their extended TAG, $\DE$; undirected edges are shown with dashed lines. }
    \label{fig:comp}
    \vspace{-0.2cm}
  \end{figure}
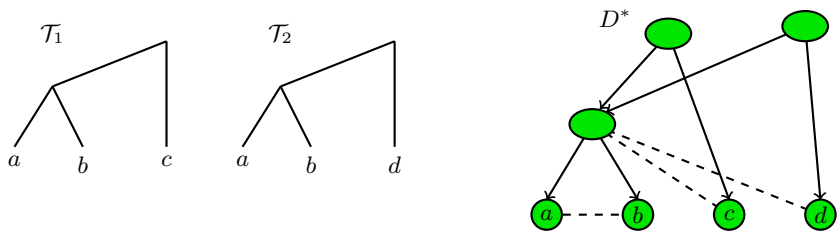

Let $D' = (U',E')$ be a mixed graph. An \emph{arc component} of  $D'$ is a maximal sub-mixed graph $W$ of $D'$ such that for every two nodes $u$ and $v$ in $W$ there is a path from $u$ to $v$ which consists only directed edges, irrespective of their directions. Let $v$ be a node of $D'$. The mixed graph obtained by deleting $v$ and its incident directed and undirected edges is denoted by $D' \setminus v$.  Let $V$ be a subset of $U'$.  We write $D'\setminus V$ to denote the (mixed) graph obtained from $D'$ by deleting each node in $V$ from $D'$.  The \emph{restriction of $D'$ to $V$}, denoted by $D'|V$, is the subgraph of $D'$ obtained by deleting each node in $U' \setminus V$ from $D'$.

\begin{figure}
\begin{algorithm}[H]
\SetAlgoLined
\SetNoFillComment
\DontPrintSemicolon
\KwIn{The extended TAG $\DE$ for a collection $\Prof$ of phylogenetic trees.}
\KwOut{A phylogenetic tree $\mathcal{T}$ that displays each tree in $\Prof$ or the statement \emph{not compatible}.}
Let $S_0$ be the set of nodes in $\DE$ that have in-degree zero and no incident edges. \;
\If{$S_0$ is empty}{\Return \emph{not compatible} \label{line:inc}}
\If{$S_0$ contains exactly one node with out-degree zero and label $\ell$}{
\Return the tree composed of singleton node with label $\ell$.}
Find the node sets $S_1, S_2, \dots ,S_m$ of the arc components of $\DE\setminus S_0$. \;
Delete all undirected edges of $\DE \setminus S_0$ whose endpoints are in distinct arc components. \;
\ForEach{$i \in \{1,2, \dots ,m\}$}{
	Call $\Desc(\DE|S_i)$ \;
	\If{this call returns \emph{not compatible}}{
		\Return  \emph{not compatible}}
	\Else{Let $\T_i$ be the phylogeny returned by this call}
}
\Return a tree with a root node and $\T_1, \T_2, \dots ,\T_m$ as its subtrees.
\caption{\Desc$(D\text{*})$\label{fig:algo2}}
\end{algorithm}
\vspace{-0.2cm}
\end{figure}

The extended TAG is closely related to the \emph{restricted descendancy graph} (RDG) \cite{Daniel:2005:AB,Berry:2006:RAB}. The RDG has a unique node for each internal input tree node along with $|S|$ leaf nodes. Let $u$ and $v$ be two input tree nodes, and let $u'$ and $v'$ be the corresponding nodes in the RDG.  If $u$ is a parent of $v$, then there is a directed edge from $u'$ to $v'$ in the RDG.  If $u$ and $v$ are siblings, then there is an undirected edge between $u'$ and $v'$ in the RDG. Otherwise, there is no edge between $u'$ and $v'$.

The extended TAG can be viewed as a compact version of the RDG of $\Prof$.  Thus, a slight adaptation of the \Desc\ algorithm  \cite{Daniel:2005:AB,Berry:2006:RAB} enables us to determine  whether $\Prof$ is compatible given its extended TAG $\DE$.  The details of this adaptation are shown in Algorithm \ref{fig:algo2}.  The algorithm first attempts to decompose the problem into subproblems, each of which corresponds to one of the subtrees of the compatible supertree.  If no such decomposition exists, then $\Prof$ is incompatible.  Otherwise, the algorithm identifies a collection of subproblems, each associated with a different arc component, and recursively tests compatibility for each subproblem.

\begin{theorem}\label{thm:compat}
Let $\DE$ be the extended TAG for a collection $\Prof$ of phylogenetic trees.  If $\Prof$ is compatible, then \Desc$(\DE)$ returns a compatible supertree for $\Prof$; otherwise,  \Desc$(\DE)$ returns the statement \emph{not compatible}. \end{theorem}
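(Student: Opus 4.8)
I would prove both implications together, by induction on $|S|$ where $S=\bigcup_{\T\in\Prof}\mathcal{L}(\T)$, reducing to the known correctness of \Desc\ on the restricted descendancy graph (RDG) \cite{Daniel:2005:AB,Berry:2006:RAB}, which implements the rooted \textsc{Build} paradigm \cite{Aho:1981:ITL,Semple:2003:phy}. The first task is to make the remark ``$\DE$ is a compact RDG'' precise: there is a surjection $q$ from the vertex set of the RDG of $\Prof$ onto $U(\DE)$ sending each internal input-tree node to the $\DE$-vertex carrying its cluster and fixing the leaves. Since in a phylogenetic tree an internal node's cluster strictly contains each child's, and two siblings have disjoint non-empty clusters, $q$ contracts no directed or undirected edge, so $\DE$ is exactly the RDG with these identifications made and the resulting parallel edges merged; in particular $q$ is a bijection on leaves, so $\DE$ and the RDG carry the same label set.

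The core step is to show that \Desc$(\DE)$ and \Desc\ on the RDG return the same tree and report \emph{not compatible} in the same cases. A naive lock-step simulation fails: a cluster $C$ that is the root of some input trees but an internal node of others is a source of the RDG, yet acquires positive in-degree in $\DE$ and so is absorbed one recursion level deeper. Instead I would carry the weaker invariant that at each recursive level the two instances share the current label set $L$ and that the arc-component decomposition of ``the graph with $S_0$ removed and cross-component undirected edges deleted'' induces the same partition of $L$, each part yielding a matched pair of instances. The facts needed are: (i) $\DE$ is acyclic (Lemma~\ref{lemma1}) and non-empty, hence has a source, and a source has no incident undirected edge (a sibling relation forces positive in-degree), so $S_0\ne\emptyset$ at the top level and line~\ref{line:inc} can fire only inside a recursive call; (ii) for each part $L_i$, the restriction $\DE|S_i$ (after deleting cross-component undirected edges) is the extended TAG of $\Prof_i:=\{\,\T|_{L_i}:\T\in\Prof\,\}$, and likewise for the RDG, so the induction applies as $|L_i|<|S|$; (iii) when $\Prof$ is compatible this partition is nontrivial and each $\Prof_i$ is again compatible, so the recursion never produces an empty $S_0$ --- the classical \textsc{Build} decomposition lemma; and (iv) the base case ($S_0$ a single labelled out-degree-zero vertex, i.e.\ $|L|=1$) returns the one-leaf tree in both settings.

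Granting the invariant, the theorem follows from the correctness of \Desc\ on the RDG: if $\Prof$ is compatible, every recursive call succeeds and \Desc$(\DE)$ returns a tree; if $\Prof$ is incompatible then, since a nontrivial decomposition into compatible pieces would glue to a compatible supertree, some recursive sub-instance is incompatible and there the algorithm reaches an empty $S_0$ and returns \emph{not compatible}. It remains to check directly, by the same induction, that the tree $\T^\ast$ returned when $\Prof$ is compatible displays every $\T\in\Prof$: each $\T_i$ displays $\T|_{L_i}$, so hanging $\T_1,\dots,\T_m$ off a common root gives a tree that, restricted to $\mathcal{L}(\T)$ with degree-two vertices suppressed, refines $\T$, since the top split of $\T$ is induced by the partition $L_1,\dots,L_m$ (possibly refined) and the argument recurses below each $L_i$.

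The main obstacle is items (ii) and (iii): showing that the merging done by $q$ is transparent to the decomposition --- two RDG nodes with a common cluster must always land in the same part (in $\DE$ they are identified), and a cluster that is a root in some trees and internal in others must be re-attached correctly even though $\DE$ handles it at a different recursion depth --- and showing that ``restrict each input tree to $L_i$'' commutes with ``build the extended TAG'', which needs care about clusters that collapse when a tree is restricted and its degree-two vertices are suppressed. Once these are settled, the rest is standard \textsc{Build}-style induction.
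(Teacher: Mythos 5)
Your proposal is essentially correct in outline but takes a genuinely different route from the paper. The paper does not quotient the RDG of $\Prof$ at all: it first relabels every internal vertex of every input tree by its cluster (identifying two labels when the clusters are equal as sets), obtaining a collection $\Prof'$ of semi-labelled trees, and then observes that the extended TAG of $\Prof$ \emph{is} the RDG of $\Prof'$; correctness is then cited directly from Berry and Semple's Proposition~4 applied to $\Prof'$, with no simulation of two runs of \Desc. This relabelling trick neatly dissolves what you correctly single out as the main obstacle --- a cluster that is the root of one input tree but an internal vertex of another is, after relabelling, just a single labelled vertex that the semi-labelled machinery places at the right depth --- whereas your lock-step comparison of \Desc\ on the RDG of $\Prof$ and on $\DE$ must confront that asymmetry head-on via the weaker ``same induced partition of $L$'' invariant, which you state but do not establish. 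The trade-off is that the paper's reduction silently requires that ordinary compatibility of $\Prof$ be equivalent to ancestral compatibility of $\Prof'$ (every cluster label must be placeable at a single vertex of the supertree consistently with all trees), a nontrivial step the sketch does not argue; your route never leaves the ordinary-compatibility setting and so avoids that equivalence, at the price of the quotient-commutes-with-decomposition work in your items (ii) and (iii). One further caution on your item (ii): $\DE|S_i$ is generally \emph{not} the extended TAG of $\{\T|_{L_i}\}$, since restriction followed by suppression of degree-two vertices collapses clusters; the invariant has to be phrased for the graphs the recursion actually manipulates (restrictions of $\DE$), exactly as Berry and Semple do for restrictions of the RDG. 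Both your argument and the paper's are sketches of comparable completeness; neither fully discharges its respective hidden lemma.
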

\begin{proof}[Sketch]  Let $\mathcal{P}'$ be the collection of phylogenetic trees after labeling the internal nodes of the input trees in $\mathcal{P}$ by their clusters. The order of labels in a cluster does not matter, that is, we assume two labels identical if their respective clusters are identical sets. Now the extended TAG of $\mathcal{P}$ is the same as the RDG of $\mathcal{P}'$. The correctness of Algorithm \ref{fig:algo2} now follows from the proof of \cite[Preposition 4]{Berry:2006:RAB} for $\mathcal{P}'$. We omit the details for lack of space. \qed \end{proof}

\paragraph{Running time:} Following \cite[Preposition 3]{Berry:2006:RAB}), we can show that if $\Prof$ consists of $k$ \emph{fully resolved} phylogenetic trees on the leaf set of size $n$, then the \Desc\ subroutine runs in time $O(n^2k^2)$.  We conjecture that the running time can be reduced to $O(nk\log^2n)$ using the approach discussed in \cite{Berry:2006:RAB}. If, however, the input trees are not fully resolved, the running time increases by a factor of $n$.

\paragraph{Remark.}  As we mention earlier, there are considerable similarities between the extended TAG and the RDG.  Nevertheless, the former has some advantages in practice. While every internal node of a tree in $\Prof$ gives rise to a distinct node in the RDG, the extended TAG has one node for each unique cluster. For instance, in the extreme case when $\Prof$ contains $k$ identical phylogenetic trees on $S$, the RDG has $O(nk)$ nodes, while the extended TAG contains only $O(n)$ nodes.  More typically, the trees in $\Prof$ will share many clusters, and the likelihood of this being the case is especially high when $k$ is much larger than $n$.

\section{Discussion}
\label{discussion}
We have presented a formal definition of the TAG that does not depend on the order of the input trees.
We have also presented a procedure for building TAGs, and described how to use TAGs to find consensus trees and to determine whether a collection of phylogenetic trees is compatible.

Extending TAGs to include potentially thousands of input trees from across the tree of life leads to several future challenges; two major ones are incorporating trees at different taxonomic levels and finding ways to synthesize conflicting phylogenetic input trees.  It may be possible to address the second of these challenges using ideas from the \textsc{AncestralBuild} algorithm \cite{Berry:2006:RAB,Daniel:2005:AB}.
Dealing with conflict among the input trees is also essential for processing large-scale phylogenetic data sets.  Although a visual inspection of the TAG provides some insight into the areas of conflict, approaches to quantify phylogenetic conflict within the TAG may provide valuable insight into mechanisms causing phylogenetic incongruence among biological datasets and help guide future phylogenetic research.
The synthesis approach of Smith et al. \cite{Smith:2015:OT} relies on a subjective ranking of the input trees. Potentially, a \textsc{MinCutSupertree} approach, like the \textsc{MultiLevelSupertree} algorithm \cite{Berry:2013:MLS}, could be applied to a TAG to provide an efficient and effective approach for synthesizing a tree of life.

\bibliographystyle{abbrv}

\end{document}